\documentclass[letterpaper,10 pt,conference]{ieeeconf}
\usepackage{preamble}

\title{\LARGE\bf  Drone Air Traffic Control:\\ Tracking a Set of Moving Objects with Minimal Power}
\author{
    Chek-Manh Loi$^{1,\dagger}$,
    Michael Perk$^{1,\dagger}$,
    Malte Hoffmann$^1$,
    Sándor Fekete$^{1,2}$
    \thanks{$^\dagger$ These authors share lead authorship}
\thanks{$^1$ Department of Computer Science, TU Braunschweig, Germany; \{loi,perk\}@ibr.cs.tu-bs.de, \{m.hoffmann,s.fekete\}@tu-bs.de; supported by the German Research Foundation (Deutsche Forschungsgemeinschaft, DFG) as part of project Computational Geometry: Solving Hard Optimization Problems (CG:SHOP) - 444569951.}
\thanks{$^2$ L3S, Germany}
}

\begin{document}
\maketitle
\begin{abstract}
    A common sensing problem is to use a set of stationary tracking locations
to monitor a collection of moving devices: Given $n$ objects that need to be
tracked, each following its own trajectory, and $m$ stationary traffic control
stations, each with a sensing region of adjustable range; how should we
adjust the individual sensor ranges in order to optimize energy consumption?
We provide both negative theoretical and positive practical results for this important and natural
challenge.
    On the theoretical side, we show that even if all objects move at constant speed along straight lines, no polynomial-time algorithm can guarantee optimal coverage for a given starting solution. 
    On the practical side, we present an algorithm based on 
    geometric insights that is able to find optimal solutions for the
\boldmath$\min \max$ variant of the problem, which aims at minimizing 
peak power consumption. 
Runtimes for instances with 500 moving objects and 25 stations are in the order of seconds 
for scenarios that take minutes to play out in the real world, demonstrating
real-time capability of our methods.
\end{abstract}

\section{Introduction}\label{sec:intro}
Keeping track of a (potentially large) collection of moving objects is a fundamental problem with a long history.
With the rise of air traffic, developing automated systems for this important task became even more critical: In \emph{air traffic control}, a typical scenario involves a set of moving planes, as well as a number of stationary control centers, which are equipped with powerful tracking devices such as long-range radar and various other communication and control capabilities to coordinate overall motion, without much regard for the power consumption of tracking devices.
With rapidly increasing number and ubiquity of small and light drones, this
type of air traffic control needs further
development~\cite{cohen2021urban,mazzenga20245g}, such as the use of less
powerful tracking stations: Given a collection of moving objects that must be
tracked by a set of stationary sensors with adjustable sensing regions,
how should we adjust the sensing radius over time to optimize power
consumption?

Formally, this is the Kinetic Disk Covering Problem (KDC), see \cref{fig:motivation} for an illustration.
Given 
a set $\pnts=\{p_1, \ldots, p_n\}$ of $n$ objects, 
each moving in space along a trajectory $p_i(t)$ over the time interval $t\in[0,1]$ as well as a set $\cnts$ of $m$ centers.
The goal is to assign a radius $r_i(t)$ to each center $\cnt_i \in \cnts$ at each time $t$ such that each point is covered by at least one disk, i.e., $\forall t \in [0,1], \forall p_j \in \pnts(t), \exists \cnt_i \in \cnts: ||\cnt_i - p_j(t)|| \leq r_i(t)$ and the maximum total area of all disks at any time is minimized, i.e., $\min \max_{t \in [0,1]} \sum_{i=1}^{m} \pi r_i(t)^2$.
Another variant of the KDC problem is to minimize the total area of all disks over time, i.e., $\min \int_0^1 \sum_{i=1}^{m} \pi r_i(t)^2 dt$.

The problem of finding an area-optimal assignment for a stationary set of points and centers is called the Disk Covering Problem (DC) and is known to be NP-hard~\cite{alt2006minimum}.
We show that the KDC problem is also NP-hard, even if we are given an optimal solution at time $t=0$.
This means that extending a solution over time in an optimal way is computationally infeasible.
We therefore study heuristic approaches for extending stationary solutions over time and provide an exact algorithm as well as efficient heuristics for the $\min \max$ variant of the KDC problem in \cref{sec:alg-approach}.
The proposed algorithms are then evaluated on several classes of instances of the KDC problem in \cref{sec:evaluation}. 
Despite the theoretic complexity, our results show that the proposed methods can find optimal solutions for realistically sized instances of the KDC problem in a matter of seconds.


\begin{figure*}[ht]
    \centering
    \includegraphics[]{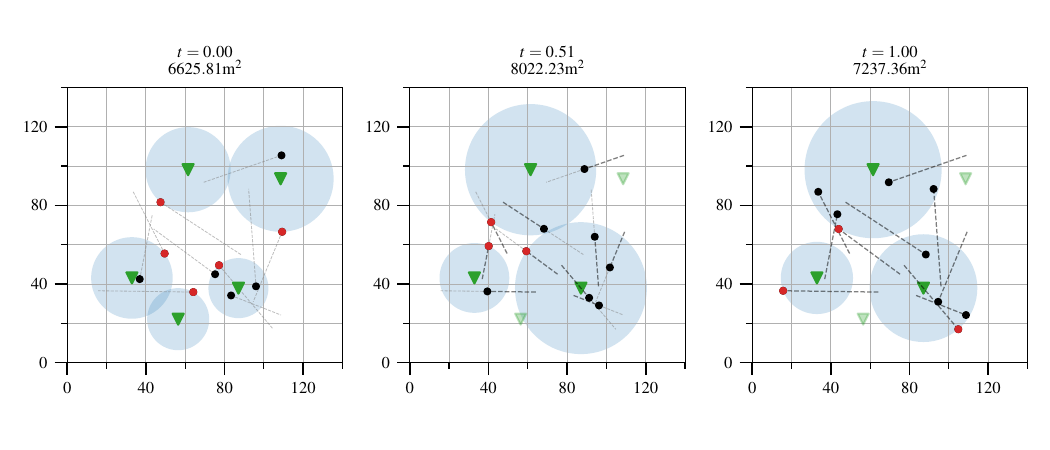}
    \vspace{-0.75cm}
    \caption{
        \label{fig:motivation}
        Solution of the KDC problem for $n=10$ moving points and $m=5$ stations.
        The points move along the dotted lines.
        The stations are shown as green triangles and are opaque, if the radius is zero.
        We start with an IP solution at time $\tm=0$.
        Our algorithm can find a solution over time, that uses at most \SI{8022.23}{\square\meter} of area over the whole time interval.
        This solution is matched by a lower-bound computed by solving the static DC problem at this time step.
    }
\end{figure*}


\section{Preliminaries}\label{sec:preliminaries}
We study the case in which each point $\pnts=\{p_1, \ldots, p_n\}$ moves along a given linear trajectory \(p_i(t) = (1-t) \cdot p_i^0 + t \cdot p_i^1 \in \mathbb{R}^2\) over the time interval $t\in[0,1]$.
We denote the set of points at time $\tm$ as $\pnts(\tm)$ and the stationary instance $\inst(\tm)=(\pnts(\tm), \cnts)$ of the DC at time $\tm$.
For simplicity, we sometimes use $\pnts$ to denote either a set of moving points or stationary points, with the intended meaning clarified by context.

In the KDC problem, we assign a radius $r_i(t)$ to each center $\cnt_i \in \cnts$ at each time $t$ such that all points are covered by at least one disk.
In an optimal solution, the radius of each disk is determined by the farthest point assigned to it.
We call these \suppnts{}.
A solution to the KDC is a list of $k$ assignments 
$\{\ass_{\tm_0},\ass_{\tm_1}, \ldots, \ass_{\tm_k}\},$ 
with $0=\tm_0\leq\tm_1\leq\ldots\leq\tm_k=1$.
Each assignment
$\ass_\tm : (\cnts\to\pnts)$ assigns \suppnts{} to centers over time.
The radius of the disk centered at $\cnt_i$ at time $\tm \in [\tm_j,\tm_{j+1})$ is $r_i(\tm) = \norm{\cnt_i - \ass_{\tm_j}(\cnt_i)(\tm)}$.

%


\section{Related Work}\label{sec:RelatedWork}
Given two point sets $\pnts$, $\cnts$ and a coverage function $\kappa$, the non-uniform minimum-cost multi-cover (MCMC) problem seeks disk radii so each $p \in \pnts$ is covered at least $\kappa(p)$ times. Abu-Affash~et~al.~\cite{abu2011multi} gave a $23.02 + 63.95(\kappa_{\text{max}} - 1)$-approximation for the uniform case, while\cite{bar2013note}~and~\cite{bhowmick2013constant} provided polynomial-time approximations for the non-uniform case. Huang~et~al.~\cite{huang2021ptas,huang2024ptas} presented the first PTAS for non-uniform MCMC, achieving solutions arbitrarily close to optimal. These works optimize disk costs using dynamic programming and recursive techniques.

Recently, there has been some work on multi-covering fixed points with disks of varying sizes~\cite{guitouni2025multi}.
Another line of closely related work examines multi-covering points with the minimum number of unit disks, where centers are arbitrary but radii are fixed. Gao~et~al.~\cite{gao2022fast} gave a 5-approximation algorithm running in $\bigO(n + \kappa_\text{max})$, and a 4-approximation with $\bigO(n^2)$ time. Filipov and Tomova~\cite{Georgiev2023} studied coverage with equal disks and proposed a stochastic optimization algorithm. These approaches focus on minimizing disk count rather than disk area.

Basch et al.~\cite{basch1999data} developed fundamental concepts for kinetic data structures.
Based on these results, Bespamyatnikh et al.~\cite{bespamyatnikh2000mobile} studied the \(k\)-center and \(k\)-means problem for both kinetic stations and objects.
Crevel et al.~\cite{crevel2012covering} studied a closely related problem of transmissioning messages to mobile receivers where the sensors are only activated at certain times to transmit a single message.

\section{Hardness}\label{sec:hardness}

Due to the known NP-hardness of the static Disk Covering (DC) problem~\cite{alt2006minimum}, the Kinetic Disk Covering (KDC) problem is NP-hard as well.
Suppose, however, that an optimal solution for the static instance at time $\tm_0$ is given. One may ask whether it is possible to optimally adapt this solution to a later time $\tm_1 > \tm_0$. The following lemma demonstrates that this is, in general, computationally intractable.

\begin{theorem}
    Given a KDC instance $I=(\pnts, \cnts)$, a cost bound $c$, and an optimal solution for the static instance at time $\tm_0$, determining whether there exists an assignment of radii at a later time $\tm_1 > \tm_0$ that covers all points with total cost at most $c$ is NP-hard.
\end{theorem}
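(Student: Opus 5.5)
The plan is a polynomial-time reduction from the static Disk Covering problem, which is NP-hard by~\cite{alt2006minimum}. Given a static DC instance $(\pnts', \cnts)$ with cost bound $c$, we build a KDC instance whose configuration at time $\tm_1$ is exactly $(\pnts', \cnts)$. Its configuration at time $\tm_0$ will be one whose optimal static solution is trivial to compute and to certify. Then the extra information ``an optimal solution at $\tm_0$'' gives the decision procedure no help, and deciding feasibility with cost at most $c$ at $\tm_1$ is as hard as static DC.

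\textbf{Construction.} Set $\tm_0=0$ and $\tm_1=1$, and keep the centers $\cnts$ unchanged. Fix one center $\cnt_1\in\cnts$. For every point $p'_j\in\pnts'$ define the linear trajectory with $p_j^0=\cnt_1$ and $p_j^1=p'_j$. At time $0$ every point coincides with $\cnt_1$, so assigning radius $0$ to every center covers all points at total cost $0$. This solution is obviously optimal, and it is written down in polynomial time. At time $1$ we have $\inst(1)=(\pnts',\cnts)$ exactly. Since the question concerns only the assignment of radii at the single time $\tm_1$, a radius assignment at $\tm_1$ of cost at most $c$ exists if and only if the DC instance has a solution of cost at most $c$. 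The construction is clearly polynomial.

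\textbf{Degenerate case and main obstacle.} If one objects that a radius-$0$ disk containing a point is degenerate, or wants all points distinct at $\tm_0$, we perturb the construction. Choose $p_j^0$ as distinct points in a tiny ball of radius $\varepsilon$ around $\cnt_1$. Take $\varepsilon$ small enough that the ball lies strictly closer to $\cnt_1$ than to any other center. The unique optimal solution at time $0$ is then: center $\cnt_1$ gets radius $\max_j\norm{p_j^0-\cnt_1}$ and all other centers get radius $0$. This is easy to verify, because any solution must cover the farthest point, and covering it from $\cnt_1$ is cheapest. The step I expect to require the most care is this certification of optimality at $\tm_0$, which is trivial in the collapsed version. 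A second concern is whether the theorem intends continuity of the radii between $\tm_0$ and $\tm_1$. Under the solution model of the Preliminaries, assignments may switch at arbitrary times $\tm_j$, so the configuration at $\tm_1$ is unconstrained by the one at $\tm_0$, and the equivalence with static DC at $\tm_1$ holds. The NP-hardness of KDC extension then follows from~\cite{alt2006minimum}.
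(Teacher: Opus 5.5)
Your proof is correct and is essentially the paper's reduction. The paper also keeps the centers unchanged and moves every point linearly from a common start location to its static position at $\tm_1=1$. It uses the origin as that start, so the optimal solution at $\tm_0=0$ is the single disk at the center nearest the origin. It then invokes NP-hardness of static DC~\cite{alt2006minimum}. Starting at a center (cost $0$) and your $\varepsilon$-perturbation, with its correct farthest-point argument, are harmless variations that make the certificate at $\tm_0$ even more transparent.
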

\begin{proof}
    Given an instance $I_s(\pnts_s, \cnts_s)$ of the static DC problem, we can construct an instance $I_k(\pnts', \cnts')$ of the KDC problem as follows.
    For each static point $\pnt \in \pnts_s$ we add a point $\pnt'(\tm)=((1-\tm)\cdot (0,0)+ \tm \cdot \pnt)$ to $\pnts'$ and
    keep the centers, i.e., $\cnts'=\cnts_s$.
    
    Now at time $\tm_0=0$ the optimal solution is a single disk from the center closest to the origin $\cnt_0$ with radius $\norm{\cnt_0}$.
    A solution with cost at most $c$ at time $\tm_1=1$ exists if and only if a solution with cost at most $c$ exists for the static instance.
\end{proof}

\section{Stationary Disk Cover}\label{sec:static-dc}
In the stationary Disk Cover (DC) problem, we consider an instances of the KDC at a specific point $\tm$ in time, i.e., we consider the stationary instance $\inst(\tm) = (\pnts(\tm), \cnts)$.
We want to find an assignment of radii to the stations $\cnts$ such that each point in $\pnts(\tm)$ is covered by at least one disk and the total area of the disks is minimized.
For this, we provide a heuristic as well as an Integer Programming (IP) formulation that can solve the DC problem to provable optimally.

\subsection{Nearest-Neighbor Heuristic}
A simple heuristic to solve the DC problem is to assign each point to its nearest station.
In particular, we start by sorting all objects according to their distance to their nearest station in descending order and maintain a set of uncovered objects.
Then, we iterate over yet uncovered objects and assign each object to its nearest station in order to cover it with the smallest increase in area.
We then update the covered objects according to the new assignment and repeat until all objects are covered.

\subsection{Integer Programming}\label{sec:IP}
An effective method for finding optimal solutions to NP-hard problems is Integer Programming~(IP).
Although solving an IP can take exponential time in the worst-case scenario, using meticulously designed mathematical models, specialized algorithm engineering, and existing IP solvers allows for solving considerably large instances to provable optimality.
To formulate the DC as an IP, we define a discrete set of candidate disks $C$.

\subsubsection{Computing the Candidate Set}
\label{sec:candidate-disks}
The radius of each station is determined by its \suppnt.
This give rise to a remarkably simple enumeration scheme.
For each station, we consider every point $\pnt_i \in \pnts$ as a potential \suppnt{}.
Specifically, we first sort all points $\pnts$ in ascending order by their distance to the center $\cnt$.
Then, for each point $\pnt_i \in \pnts$, we add a disk centered at $\cnt$ with radius $\norm{\cnt - \pnt_i}$ to the candidate set $\candidateset$; this disk covers all points up to and including $\pnt_i$ in the ordering.

This yields $\bigO(m n)$ possible disks, as for each of the $m$ stations we consider $n$ potential support points.
Accounting for the sorting of the points for each station, the overall time to compute the candidate set is $\bigO(n m \log n)$.

\subsubsection{IP Formulation}\label{sec:ip-formulation}

For every disk $d_i$ in the candidate set $C$, we define a binary variable $x_i$ that encodes if the disk is used in the solution.
The constraint ensures that every point $\pnt_j\in \pnts$ is covered by at least one station.
\begin{equation*}
    \begin{array}{ll@{}ll}
        \text{minimize}  & \displaystyle \pi \cdot \sum\limits_{d_i\in C} &r_{i}^2x_{i} &\\
        \text{subject to}
                         & \displaystyle\sum_{\substack{d_i\in C\\ \pnt_j \in d_i}}   &x_{i} \geq 1,  & \forall \pnt_j \in \pnts\\
                         &&x_{i} \in \{0, 1\}, & \forall d_i \in C
    \end{array}
\end{equation*}

\section{Algorithmic Approaches}\label{sec:alg-approach}
Here we discuss algorithmic approaches to solve the KDC problem.
The idea is to extend every solution to the (stationary) DC problem at time $\tm$ to a solution for the KDC problem, see~\cref{sec:ensuring-feasibility}.
Further geometric insights allow us to improve extended solutions, see~\cref{sec:improving-extended-solutions}.
Finally, we can use these insights to design an iterative algorithm to solve the KDC problem, see~\cref{sec:sweep}.

\subsection{Ensuring Feasibility}\label{sec:ensuring-feasibility}
Every solution to the (stationary) DC problem at time $\tm$ can be extended to a solution for the KDC problem over time by maintaining the same assignment of objects to stations and altering the \suppnt{} of each station as needed to ensure coverage.
This yields a feasible solution for the KDC problem, but not necessarily an optimal one.
\cref{fig:ep1} shows, how initially the \suppnt{} of the station $\centername$ is $\pointbname$.
At time $\tm \approx 0.41$ the disks defined by $\pointaname$ and $\pointbname$ have the same radius.
After this point in time, the \suppnt{} of the disk changes to $\pointaname$.

\begin{figure}[htbp]
    \centering
    \begin{subfigure}[b]{.32\columnwidth}
        \centering
        \input{figures/tikz/ep1_1.tikz}
        \caption{
            $\tm = 0.2$
        }
    \end{subfigure}
    \centering
    \begin{subfigure}[b]{.32\columnwidth}
        \centering
        \input{figures/tikz/ep1_2.tikz}
        \caption{
            $\tm \approx 0.41$
        }
    \end{subfigure}
    \centering
    \begin{subfigure}[b]{.32\columnwidth}
        \centering
        \input{figures/tikz/ep1_3.tikz}
        \caption{
            $\tm = 0.6$
        }
    \end{subfigure}
    \caption{
        \label{fig:ep1}
        The \suppnt{} of station $\centername$ changes from $\pointaname$ to $\pointbname$, to ensure coverage of the objects at all times.
    }
\end{figure}

\begin{lemma}\label{lemma:suppnt-changes}
    In $\bigO(m n^2)$ time, all $\bigO(m n^2)$ possible \suppnt{} changes when extending a stationary solution to $t \in [0,1]$ can be computed.
\end{lemma}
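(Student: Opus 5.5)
The plan is to reduce support-point changes to pairwise "crossing events" between objects assigned to the same station, and then to count and compute these events directly.

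Fix the stationary solution at time $\tm$ and the induced assignment of objects to stations. For a station $\cnt$, let $P_\cnt\subseteq\pnts$ be the objects assigned to it. When the assignment is kept fixed, the \suppnt{} of $\cnt$ at time $t$ is the object in $P_\cnt$ that is farthest from $\cnt$ at time $t$. A \suppnt{} change can therefore only happen at a time $t$ where two objects $p_a,p_b\in P_\cnt$ are equally far from $\cnt$, so it suffices to enumerate these crossing times.

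First, compare the squared distances. Since $p_a(t)=(1-t)p_a^0+tp_a^1$ is linear in $t$, the function $f_a(t)=\norm{\cnt-p_a(t)}^2$ is a quadratic polynomial in $t$. The difference $f_a(t)-f_b(t)$ is a polynomial of degree at most two and vanishes either identically or in at most two points of $[0,1]$. In the identically vanishing case the two disks coincide for all $t$, so either object may serve as \suppnt{} and no change is needed. Each pair thus contributes at most two candidate events. These can be computed in $\bigO(1)$ time by solving the quadratic equation and discarding roots outside $[0,1]$ (or outside $[\tm,1]$ if we only extend forward).

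Next, count. For station $\cnt$ there are $\bigO(|P_\cnt|^2)$ pairs, hence $\bigO(|P_\cnt|^2)$ candidate events. Summing over stations gives $\sum_\cnt |P_\cnt|^2\le m n^2$, since every $|P_\cnt|\le n$. Computing all events therefore takes $\bigO(mn^2)$ time, which yields the bound on the number of possible changes. The bound also holds without using the partition: one can simply take all $n$ objects for every station, since the statement says ``possible'' changes. Either way there are at most $2\binom{n}{2}$ events per station.

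The main obstacle is that not every crossing is an actual \suppnt{} change: if neither $p_a$ nor $p_b$ is the farthest object at that moment, the event is irrelevant. The lemma only claims that all possible changes are computed, so the candidate set is a valid superset and suffices for the stated bound. If the actual changes are wanted, I would filter the candidates per station. Sort the $\bigO(|P_\cnt|^2)$ events by time and sweep, maintaining the current farthest object. At each event, check whether it involves the current \suppnt{} and whether the other object overtakes it, and if so switch. This is the upper envelope of $|P_\cnt|$ parabolas, which by the degree argument has pairwise at most two intersections, so the envelope has only $\bigO(|P_\cnt|)$ breakpoints. The sorting, however, costs an extra $\log$ factor. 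To stay within $\bigO(mn^2)$ I would report the raw candidate events, or alternatively compute the envelope in $\bigO(|P_\cnt|\log|P_\cnt|)$ by divide and conquer, which is dominated by $\bigO(mn^2)$.
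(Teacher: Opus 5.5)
Your proposal is correct and follows essentially the same route as the paper. Like the paper, you solve the quadratic equation $\norm{\cnt-p_a(t)}^2=\norm{\cnt-p_b(t)}^2$ for each pair of objects assigned to the same station in $\bigO(1)$ time, and count $\bigO(n^2)$ pairs per station to get $\bigO(mn^2)$ candidate changes. Your extra Davenport--Schinzel observation, that the actual changes per station form an upper envelope with only $\bigO(|P_\cnt|)$ breakpoints, is a correct sharpening the paper does not make, but the stated bound does not need it.
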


\begin{proof}
    We can compute the time when the \suppnt{} changes, by solving the equation
    \begin{equation}
    \norm{\cnt - \pointaname(\tm)}^2 = \norm{\cnt - \pointbname(\tm)}^2.
    \end{equation}
    for two objects $\pointaname$ and $\pointbname$ assigned to the same station $\cnt$. 
    This yields a quadratic equation with at most two solutions,
    solvable analytically in $\bigO(1)$ time.

    As each station can have at most $n$ different \suppnts{} and each \suppnt{} can change to at most $n-1$ other objects, this yields $\bigO(m n^2)$ possible changes of \suppnts{}.
\end{proof}

Note that in degenerate cases the \suppnt{} might not change or multiple objects might have the same distance to a station, see \cref{fig:ep1_deg}.
We can resolve these cases, by choosing the object that is moving fastest away from the station as the new \suppnt{}.
This can be achieved by computing the derivative of the function defining the radius of $\cnt$ for each object $\pnt$ that has the same distance to the station at the event point.

\begin{figure}[htbp]
    \centering
    \begin{subfigure}[b]{.49\columnwidth}
        \centering
        \input{figures/tikz/ep1_deg_1.tikz}
    \end{subfigure}
    \centering
    \begin{subfigure}[b]{.49\columnwidth}
        \centering
        \input{figures/tikz/ep1_deg_2.tikz}
    \end{subfigure}
    \caption{
        Degenerate cases after which the assignment of \suppnt{} might change.
        In both cases we choose $\pointbname$ as the new \suppnt{}.
    }
    \label{fig:ep1_deg}
\end{figure}

\subsection{Improving Extended Solutions}\label{sec:improving-extended-solutions}
\Cref{sec:ensuring-feasibility} shows that stationary solutions can be extended by computing points in time where the \suppnt{} of a station changes.
Naturally, even when starting with an optimal solution for the stationary variant, this can lead to suboptimal solutions.
To reduce the overall cost, we consider a second type of event that changes the assignment of objects to stations, see \cref{fig:ep2}.
Intuitively, we want to identify points in time in which it is cheaper to handover a point from one station to another.

\begin{figure}[htbp]
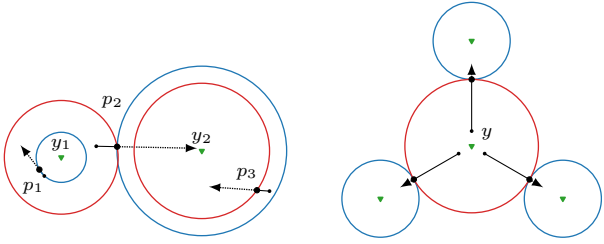

    \vspace{-0.5cm}
    \centering
    \begin{subfigure}[b]{.49\columnwidth}
        \centering
        \input{figures/tikz/ep2.tikz}
    \end{subfigure}
    \centering
    \begin{subfigure}[b]{.49\columnwidth}
        \centering
        \input{figures/tikz/ep3.tikz}
    \end{subfigure}
    \caption{
        (Left)
        Station $\cnt_2$ takes over a point $\pnt_2$.
        The objective values of the red and blue disks are equal at this point in time.
        We change the \suppnt{} of both $\cnt_1$ and $\cnt_2$ at the same time.
        (Right) 
        Generalization of the case depicted on the left to more than two stations.
    }
     \label{fig:ep2}
     \label{fig:ep3}
    \label{fig:EPS}
\end{figure}


\begin{lemma}\label{lemma:handover-changes}
    In $\bigO(m^2 n^3)$ time all $\bigO(m^2 n^3)$ possible handovers of objects between two stations (when extending a stationary solution to $t \in [0,1]$) can be computed.
\end{lemma}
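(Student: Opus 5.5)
We mirror the argument of \cref{lemma:suppnt-changes}: we identify a handover event with a time at which two competing configurations have equal objective value, and reduce its computation to solving a constant-degree polynomial equation in $\tm$.

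Consider two stations $\cnt_1,\cnt_2$ and an object $\pnt$ currently assigned to $\cnt_1$. A handover of $\pnt$ to $\cnt_2$ changes the \suppnt{} of both stations simultaneously (\cref{fig:ep2}). Before the handover, $\cnt_1$ is supported by some object $\pnt_a$ (possibly $\pnt$ itself) and $\cnt_2$ by some $\pnt_b$. After it, $\cnt_1$ is supported by some $\pnt_c$ (its farthest remaining object) and $\cnt_2$ by some $\pnt_d$ (either $\pnt$ or its previous farthest object). The event time is a root of
\begin{equation*}
\norm{\cnt_1-\pnt_a(\tm)}^2+\norm{\cnt_2-\pnt_b(\tm)}^2=\norm{\cnt_1-\pnt_c(\tm)}^2+\norm{\cnt_2-\pnt_d(\tm)}^2 .
\end{equation*}
Because all trajectories are linear in $\tm$, each squared distance is a quadratic polynomial in $\tm$. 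The equation is therefore quadratic, has at most two roots (or holds identically, a degenerate case handled by the derivative tie-break from \cref{sec:ensuring-feasibility}), and is solved in $\bigO(1)$ time.

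For the counting, we charge each candidate event to an ordered pair of stations ($\bigO(m^2)$ choices) and a bounded number of objects. The key observation for reaching $n^3$ rather than $n^4$ is that one of the four roles is fixed by the handed-over object itself. Either the receiving station's new \suppnt{} is $\pnt$ (so $\pnt_d=\pnt$), or the giving station's old \suppnt{} was $\pnt$ (so $\pnt_a=\pnt$). Otherwise the handover does not change the corresponding radius and reduces to a support-point change already covered by \cref{lemma:suppnt-changes}. In either case, an event is determined by $\pnt$ together with at most two further objects, giving $\bigO(n^3)$ triples per station pair and $\bigO(1)$ roots per triple. This yields $\bigO(m^2 n^3)$ events overall, and enumerating all triples and solving each equation in constant time gives the $\bigO(m^2 n^3)$ time bound.

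The main obstacle is justifying this reduction rigorously. We need to argue that at a relevant handover at least one of the four supporting objects coincides with $\pnt$, and that a handover in which neither the giving nor the receiving station's supporting object involves $\pnt$ changes neither radius and so is never beneficial. We would settle this first by a short case analysis on whether $\pnt$ is the farthest object of $\cnt_1$ before the handover and whether it becomes the farthest object of $\cnt_2$ after it.
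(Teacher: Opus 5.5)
Your proposal is correct and follows essentially the same route as the paper. Both arguments equate the costs of the two configurations and note that the equation is quadratic in $t$ because the trajectories are linear. Both then count $O(m^2)$ station pairs times $O(n^3)$ object triples, with the saving over $n^4$ coming from the handed-over object playing two roles; the paper only treats the case where that object is both the old support of the giving station and the new support of the receiving one.

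The ``main obstacle'' you flag is already settled by your own definitions. Since $p_d \in \{p, p_b\}$ always and $p_c = p_a$ whenever $p_a \neq p$, every 4-tuple is fixed by three objects without any case analysis on benefit. In the leftover case ($p_a \neq p$ and $p_d \neq p$) the equation is an identity, so there is no event at all, rather than a support-point change from the earlier lemma.
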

\begin{proof}
    For two stations $\cnt_1$ and $\cnt_2$ with \suppnts{} $\pointaname$ and $\pointbname$, respectively, we consider the case in which $\cnt_2$ takes over the point $\pointbname$ from $\cnt_1$ at time $\tm$, see \cref{fig:ep2}.
    When $\cnt_2$ takes over a point from a station $\cnt_1$, the radius of the disk centered at $\cnt_1$ will shrink, as the \suppnt{} of $\cnt_1$ changes to the next furthest point $\pointcname$.
    At the same time the radius of the disk centered at the other station $\cnt_2$ will grow, as its \suppnt{} changes to the point it takes over from $\cnt_1$.
    Again, we can compute the time $\tm$ at which both solutions have the same cost, by solving a set of equations that can be solved analytically.
    \begin{align*}
        \norm{\cnt_1- \pointbname(\tm)}^2 &+ \norm{\cnt_2- \pointcname(\tm)}^2
        \\
        = \norm{\cnt_1- \pointaname(\tm)}^2 &+ \norm{\cnt_2- \pointbname(\tm)}^2
    \end{align*}

    As each station can have at most $n$ different \suppnts{} and each \suppnt{} can change to at most $n-1$ other objects, this yields $\bigO(m^2 n^3)$ possible handovers of objects between two stations.
\end{proof}


\begin{figure}[htbp]
    \centering
    \input{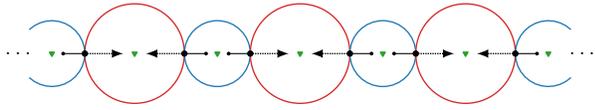}
    \caption{
        \label{fig:ep3_chain}
        Multiple stations and objects in the same event point.
        Either the blue or the red disk cover the objects, but not both.
    }
\end{figure}

The concept of handing over objects between stations can be generalized to three or more stations, see \cref{fig:ep3}. 
Furthermore, a chain of stations can successively take over objects from each other as time progresses, see~\cref{fig:ep3_chain}.
Initially, the optimal solution consists of blue disks covering all objects. 
As time progresses, there is a transition point at $\tm = 0.5$ for which the optimal solution switches to red disks, and the previously used blue disks are no longer part of the solution. 
This example can be extended to an arbitrarily long chain of objects and stations.
In the worst case, detecting these types of events requires comparing every subset of stations with every other subset, resulting in $\Omega(2^m)$ possible events.
Thus, we cannot hope to find an efficient algorithm that considers all possible handovers.

\subsection{Iterative Algorithm}\label{sec:sweep}
We present an algorithm that uses the observations from \cref{sec:ensuring-feasibility,sec:improving-extended-solutions} to compute heuristic and optimal solutions for the $\min \max$ KDC problem.
The algorithm maintains a solution over time that contains intervals in which the assignment of objects to stations does not change.

\subsubsection{Initial Solution}
The algorithm starts by computing a stationary solution for $\tm=0$ using any of the methods presented in \cref{sec:static-dc}.
In the case of the integer programming formulation, the algorithm can also provide a valid lower bound on the optimal solution for the $\min \max$ KDC problem which can be used to determine the quality of the solution during the solving process.

\subsubsection{Extending the Solution}\label{sec:alg-extending-solution}
Using a static solution at time $\tm$, we extend the solution over time by computing the necessary adjustments from \cref{lemma:suppnt-changes} to obtain a feasible solution.
To speed up the computation, we cache the next \suppnt{} change for each station $\cnt$ and update the cache whenever the assignment of $\cnt$ changes.

\subsubsection{Iterative Approach}
The algorithm then proceeds by identifying the point $\tm'\in [0,1]$ where the current solution has the maximum total area.
This point must lie at the intersection of two intervals in which the assignment of objects to stations changes because the objective value within each interval is an upward-opening parabola.
At this point in time we compute a new stationary solution for the DC problem using a method from \cref{sec:static-dc}.
If the stationary solution has equal or higher objective value than the current solution, we terminate as there is no hope of improving the current solution.
If the integer programming formulation is used, we can also terminate if the lower bound 
is within a desired optimality gap of the current solution. 
Initially, the (stationary) IP is solved only $1\%$ optimality,
which is then gradually decreased to $0.01\%$ when the current KDC solution is within $1.5\%$ of the lower bound.

If the stationary solution has a smaller objective value than the current solution, we can improve our current solution at time $\tm'$ and potentially over the whole interval $[0,1]$ or any set of subintervals.
This is done by extending the solution at time $\tm'$ with the methods from \cref{sec:alg-extending-solution} both in the positive and negative time direction.
Afterwards the solutions are merged into a single solution that is valid over the entire time interval, see \cref{sec:alg-combining-previous-solutions}.
This process is repeated until no further improvement is possible.

\subsubsection{Combining with Previous Solutions}\label{sec:alg-combining-previous-solutions}
Given a new solution for an interval $[\tm_1, \tm_2]$, we can combine it with the current best solution.
This can be done by computing the intersection points of the objective values of the two solutions.
As each solution consists of intervals in which the assignment of objects to stations does not change, the objective value of each solution is piece-wise quadratic and thus this intersection can be computed efficiently.
We then compute the lower envelope of the two solutions in $\bigO(k_1 + k_2)$ time, where $k_1$ and $k_2$ are the number of intervals in the two solutions.
This yields a new solution that is valid over the entire time%
, see \cref{fig:sweep}.

\begin{figure}[t]
    \centering
    \input{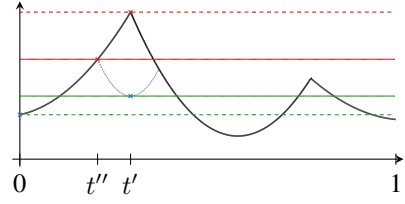}
    \caption{
        Iterative approach to improve the $\min\max$ power consumption over time.
        Green and red lines denote lower and upper bounds on the optimal solution. 
    } \label{fig:sweep}
    \vspace{-0.5cm}
\end{figure}

\subsubsection*{Termination}
The algorithm terminates when no further improvement is possible, or when the lower bound is within a desired optimality gap of the current solution.
We can show that the algorithm always terminates.
\begin{lemma}
    The algorithm terminates after a finite number of iterations.
\end{lemma}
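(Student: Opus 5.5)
We argue by a finiteness-plus-strict-monotonicity argument: the set of objective functions the algorithm can ever hold is finite, and each non-terminating iteration strictly decreases the peak value.

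\textbf{Finitely many candidate solutions.} Every solution the algorithm maintains is a finite sequence of intervals, and on each interval an assignment $\ass$ of \suppnts{} to stations is fixed.
\begin{itemize}
  \item By the candidate-set construction of \cref{sec:candidate-disks}, there are at most $n^m$ (or $(n+1)^m$, allowing zero radius) such assignments. Hence the objective on any interval is one of finitely many quadratic functions $f_\ass(\tm)=\pi\sum_i\norm{\cnt_i-\ass(\cnt_i)(\tm)}^2$.
  \item An interval endpoint is produced in one of three ways: by a \suppnt{} change as in \cref{lemma:suppnt-changes}, by a restart time $\tm'$, or by an intersection of two such quadratics in the lower-envelope merge of \cref{sec:alg-combining-previous-solutions}.
  \item Each restart time $\tm'$ is itself a maximizer of the current piecewise-quadratic objective. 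As the paper notes, it lies at a breakpoint, i.e., at an intersection of two of the finitely many quadratics $f_\ass$, or at $0$ or $1$.
  \item Two distinct quadratics intersect at most twice. We treat identical $f_\ass$ as the same function, so coinciding functions cause no extra breakpoints.
  \item Therefore all breakpoints and restart times lie in a fixed finite set $T\subset[0,1]$, determined by the instance alone.
\end{itemize}
The objective of any maintained solution is thus a piecewise function with breakpoints in $T$, whose pieces are drawn from the finite family $\{f_\ass\}$. There are only finitely many such functions.

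\textbf{Strict monotonicity.} Let $F_k$ be the objective function after iteration $k$, and let $M_k=\max_\tm F_k(\tm)$.
\begin{itemize}
  \item The algorithm continues only if the stationary solution at $\tm'$ has value strictly less than $F_k(\tm')=M_k$.
  \item The merge takes the pointwise lower envelope, so $F_{k+1}\le F_k$ everywhere and $F_{k+1}(\tm')<F_k(\tm')$. In particular $F_{k+1}\neq F_k$.
  \item Note that $M_{k+1}$ need not be smaller than $M_k$, since the maximum may be attained at several times. So we use the pointwise order: the sequence $F_0\ge F_1\ge\cdots$ consists of pairwise distinct functions, since $F_{k+1}\le F_k$ with strict inequality somewhere rules out revisiting a function.
\end{itemize}
A strictly decreasing sequence in a finite set must be finite, so the algorithm terminates.

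\textbf{Main obstacle.} The delicate step is justifying that $T$ is finite and independent of the iteration history. A restart time $\tm'$ creates a new breakpoint, and it could in principle generate further new breakpoints in later iterations. The resolution is that every breakpoint, including each $\tm'$, is an intersection point of two functions from the fixed finite family $\{f_\ass\}$, or one of the endpoints $0,1$. This relies on the degenerate tie-breaking rule after \cref{lemma:suppnt-changes} being deterministic, so that extending from a given $(\tm,\ass)$ always yields the same piecewise function.

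If the IP is solved only to a gap, the stationary solution may not be optimal. The argument is unaffected, because it uses only the strict-improvement test and never optimality of the stationary solution.
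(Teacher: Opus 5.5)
Your proof is correct, but it takes a different route from the paper's.

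\textbf{The paper's route.} The paper argues through the optimum. It takes the lower envelope $\mathcal{L}$ of all feasible assignment--interval pieces, which is the pointwise-optimal solution. It then claims that each iteration with an exact IP inserts a piece of $\mathcal{L}$ into the current solution. By pigeonhole, once the maximum falls into a piece already captured, the static optimum at $\tm'$ equals the current value and the algorithm stops.

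\textbf{Your route.} You never refer to the optimum. You confine every maintained objective to a finite family: breakpoints lie in the instance-determined set $T$ of zeros of differences of the finitely many quadratics, and pieces come from that family. You then use the pointwise order as a potential, with $F_{k+1}\le F_k$ and $F_{k+1}(\tm')<F_k(\tm')$.

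\textbf{What each buys.} Your version is more general and more rigorous.
\begin{itemize}
\item It uses only the strict-improvement test, so it also covers the nearest-neighbor variant and the IP solved with a nonzero gap.
\item It correctly handles the fact that the peak $M_k$ need not drop.
\item It avoids the paper's unproved claims that extending an optimal static solution reproduces a whole piece of $\mathcal{L}$, and that each iteration captures a new piece.
\end{itemize}
The paper's version gives a far tighter and more meaningful iteration bound, roughly the number of pieces of $\mathcal{L}$. It also ties termination to optimality of the output, which your argument does not address.

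\textbf{Minor remarks.}
\begin{itemize}
\item Deterministic tie-breaking is not actually needed. Finiteness of $T$ and of the function family does not depend on which tied object is chosen.
\item If a piece is constant, the maximizer $\tm'$ may lie inside a piece and hence outside $T$. This is harmless: $\tm'$ becomes a genuine breakpoint of $F_{k+1}$ only if two distinct distance functions tie there, which puts it in $T$ anyway.
\item The same zero-of-a-difference observation also places the handover events of the improved extension in $T$.
\end{itemize}
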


\newcommand{\lowerenvelope}{\mathcal{L}}
\begin{proof}
    There are only finitely many assignments of objects to stations.
    For any fixed assignment, the objective value as a function of time is a quadratic polynomial.
    However, the assignment may only be feasible on a subset of the time interval.
    By \cref{lemma:suppnt-changes}, feasibility can only change finitely many times. Thus, each quadratic function can be partitioned into a finite number of feasible pieces.
    Consequently, there are only finitely many assignment–interval pieces to consider across all assignments.

    The lower envelope \(\lowerenvelope\) of these pieces yields the optimal solution that minimizes the total area over time, i.e., it is optimal at every $t\in[0,1]$.
    Whenever we compute an optimal static solution using our integer program, our combined solution includes a piece of \(\lowerenvelope\).
    Once our algorithm computes an optimal solution on the same piece of \(\lowerenvelope\) twice, i.e., it first proves a lower bound and later finds the maximum in the same piece, no better solution can be found and the algorithm terminates.
    Because there are only finitely many pieces, the algorithm must terminate after finitely many iterations.
\end{proof}

\subsubsection*{Improvements to the Algorithm}
We propose further improvements to the above algorithm.
The first improvement is to the extend solution routine from \cref{sec:alg-extending-solution} by considering handovers of objects between two stations, see \cref{lemma:handover-changes}. Again an efficient caching scheme can be used to speed up the computation.

The second improvement targets assignments that are obtained while extending the solution.
Given a suboptimal solution for the DC problem, we can improve it by altering the assignment of objects to stations.
If the \suppnt{} of a center $\cnt_i$ is also covered by another disk $\cnt_j$, we assign the second furthest object of $\cnt_i$ as the new \suppnt{} of $\cnt_i$.
This reduces the radius of $\cnt_i$ and therefore the total area of the disks.
We can repeat this process until no further improvement is possible.

Finally, we reduce the time spent for extending and combining solutions. 
This can be done by simultaneously extending and combining the stationary solution (which is better than the current solution) only until it intersects the current solution.
The influence of both improvement strategies is evaluated in \cref{sec:evaluation}.

\newcommand{\fixedn}{\texttt{fix\_n}}
\newcommand{\fixedm}{\texttt{fix\_m}}
\newcommand{\fixed}{\texttt{fix}}
\newcommand{\public}{\texttt{pub}}

\newcommand{\algIP}{\texttt{IP}}
\newcommand{\algNN}{\texttt{NN}}
\newcommand{\algFixedNN}{\texttt{FixedNN}}

\newcommand{\stratRemoveDup}{\texttt{NoDup}}
\newcommand{\stratImprovedExtension}{\texttt{ImpExt}}
\newcommand{\stratPartialExtend}{\texttt{PartExt}}

\section{Evaluation}\label{sec:evaluation}

Experiments were carried out on Linux desktop workstations with
AMD Ryzen 9 7900 ($12\times$\SI{3.7}{\giga\hertz}) CPUs and \SI{96}{\giga\byte} of RAM running Ubuntu 24.04.3 LTS.
Code and data with benchmark problems and results are available~\cite{ouranonymousgit}.
We use the Gurobi~\cite{gurobi} solver for the integer program, with a default optimality gap of $0.01\%$ and a time limit of \SI{600}{\second}. 
To detect numerical issues for degenerate cases, we added checks to ensure correctness of our implementation.
A version that uses exact number types for finding roots of the quadratic equations was also tested, see \cref{sec:evaluation-instances-from-literature} for details.
We formulate the following research questions:
\begin{enumerate}
    \item[RQ1] Which of the proposed improvement strategies is effective in practice?
    \item[RQ2] How does the heuristic and IP perform for different values of $m$ and $n$?
    \item[RQ3] How does the difficulty of instances change when considering degenerate cases?
    \item[RQ4] How does the algorithm perform on instances from literature?
\end{enumerate}

For the instance generation, we used line segments with lengths uniformly distributed in $[25,50]$ in a $100\times 100$ grid.
In real-world applications where objects are drones have velocities up to \SI{100}{\kilo\metre\per\hour}, this resembles scenarios that take between $15$ and $30$ minutes to play out on a \SI{100}{\kilo\metre\squared} canvas.
In total, we have the following instances sets.
\begin{description}
    \item[\fixed]
        We generate 25 instances with $m=25$ and $n=500$.
    \item[\fixedn]
        We fix the number of objects to $n=500$.
        For each number of sensors $m\in\{5,10,\ldots,50\}$ we generate $10$ instances.
    \item[\fixedm]
        We fix the number of sensors to $m=25$.
        For each number of objects $n\in\{50,100,\ldots,500\}$ we generate $10$ instances.
    \item[\public]
        Instances from well-known publicly available benchmarks used in~\cite{fekete_et_al:LIPIcs.SoCG.2025.48}.
      These include point sets from the CG:SHOP challenges~\cite{demaine2022area,demaine2020computing},
      TSPLIB instances~\cite{reinelt1991}, instances from a VLSI dataset~\cite{VLSI} and point sets from the Salzburg Database of Polygonal Inputs~\cite{EDER2020105984}.
      We scaled point sets to \SI{1}{\metre} resolution, such that they have a diameter of \SI{100}{\kilo\metre} and sampled $m=25$ centers uniformly at random.
      The minimum and maximum length of trajectories was set to \SI{25}{\kilo\metre} and \SI{50}{\kilo\metre}. 
      This yields a graph $G$ with all possible trajectories as edges.
      We compute a random maximum cardinality matching in $G$, yielding $\lfloor \frac{N-25}{2}\rfloor$ objects in almost all cases.
      As this yielded some degenerate cases in which many trajectories had the same length, we added a small random perturbation to the trajectories to break ties.
\end{description}

\begin{figure}[ht]
    \centering
    \includegraphics[]{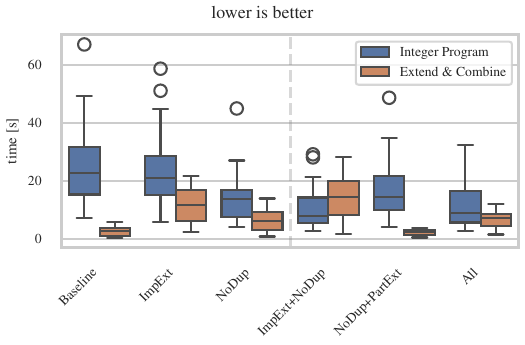}
    \caption{
        Evaluation of different improvement strategies on the \fixed{} benchmark set.
        The results show that enabling all three solution strategies yields the best performance.
    }\label{fig:configurations}
\end{figure}

\subsection{RQ1: Improvement Strategies}
\label{sec:evaluation-improvement-strategies}
We evaluate the influence of three proposed improvement strategies to the algorithm from \cref{sec:alg-approach}. 
We denote the removal of duplicates as {\stratRemoveDup}, the improved extension according to handovers from \cref{lemma:handover-changes} as {\stratImprovedExtension} and partial extension as {\stratPartialExtend}.
Our baseline is the exact algorithm that uses the IP solver from \cref{sec:IP}.
We compare the total runtime of the different components of the algorithm, see~\cref{fig:configurations}, i.e., the time to compute the stationary IP solutions and the time to extend and combine solutions.
We see that all improvement strategies have a positive effect on the IP solving time.
Our experiments show however, that both \stratImprovedExtension{} and \stratRemoveDup{} have a negative effect on the time to extend and combine solutions which in case of \stratImprovedExtension{} results in a higher overall runtime compared to the baseline.

The use of \stratPartialExtend{} however reduces the time spent during extension and combination significantly, which mitigates the negative effect of the other two strategies while still reducing the time spent in the IP solver.
This results in the best configuration that enables all three improvement strategies and achieves an overall runtime reduction of about $39\%$ compared to the baseline.

\subsection{RQ2: Influence of $m$ and $n$}
\label{sec:evaluations-influence-m-n}
We evaluate the influence of the number of objects~$n$ and the number of stations~$m$ on the performance of the algorithm.
Our comparison uses the best configuration from \cref{sec:evaluation-improvement-strategies}.
We execute the algorithm from \cref{sec:alg-approach} with both an IP solver and the nearest neighbor heuristic from \cref{sec:static-dc} to solve the DC problem in each iteration.
We denote the IP-based algorithm as {\algIP} and the nearest neighbor based algorithm as {\algNN}.
Additionally, we implemented another heuristic that divides the time $[0,1]$
into $k=10$ evenly spaced intervals and computes the nearest neighbor solution
at each border. %
The algorithm \algFixedNN{} then extends the solutions using the routine from \cref{sec:alg-extending-solution} and
reports the lower envelope of all computed solutions.
We note that one instance of \algIP{} on the \fixedn{} benchmark was restarted with a different seed due to numerical instability in the non-exact extension step that caused our checks to fail.

\begin{figure}[h]
    \includegraphics[trim={0 0.1cm 0 0},clip]{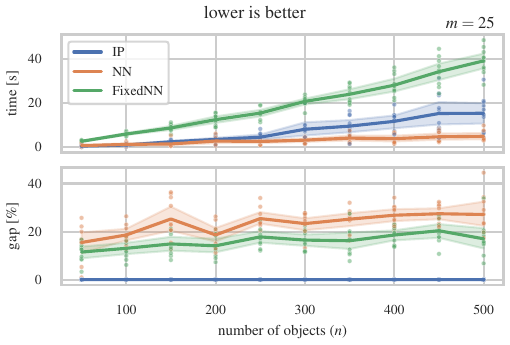}
    \includegraphics[trim={0 0 0 0.29cm},clip]{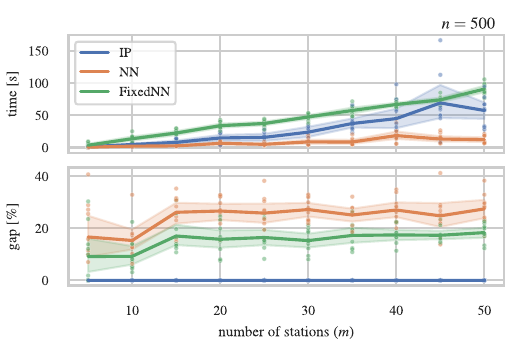}
    \caption{
        Performance of \algIP{}, \algNN{} and \algFixedNN{} on instances from the \fixedm{} and \fixedn{} datasets.
        While \algIP{} is significantly slower than \algNN{}, it produces optimal solutions for all instances.
        \algFixedNN{} is slightly better in terms of solution quality than \algNN{} but significantly slower.
    }
    \label{fig:fixednm}
\end{figure}

We evaluate the performance of the different algorithms on \fixedm{} and \fixedn{} to access the influence of $m$ and $n$ on the overall runtime and quality of solutions, see \cref{fig:fixednm}.
For the \fixedm{} benchmark, we see that \algIP{} produces optimal solutions for instances with up to $500$ objects in less than \SI{30}{\second}.
The \algNN{} heuristic is significantly faster, solving all instances within $6$ seconds.
Although \algNN{} is faster, it produces solutions with an average optimality gap of about $20\%$ for smaller instances and up to $45\%$ for larger instances.
The \algFixedNN{} is slightly better in terms of solution quality than \algNN{} with an average optimality gap of around $20\%$ for all instances but is significantly slower with an average runtime even higher than \algIP.
For the \fixedn{} benchmark, we see a similar trend but observe a higher influence of $m$ on the performance of the algorithms.
In particular doubling the number of stations from $m=25$ to $50$ for $n=500$ objects more than doubles the maximum runtime of \algIP{} from about \SI{25}{\second} to about \SI{70}{\second}.

In summary, \algIP{} is able to solve all instances to optimality in a matter of seconds, demonstrating real-time capability for scenarios that take minutes to play out in the real world.

\subsection{RQ3: Degenerate Cases}
\label{sec:evaluation-degenerate-cases}

In many real-world scenarios, objects may move in a coordinated manner.
Planes often follow established air corridors, and drones may operate in formations or along predefined paths.
We evaluate the performance of the best algorithm \algIP{} from \cref{sec:evaluations-influence-m-n} on on four classes of degenerate instances;
one with uniformly random trajectories (\fixed{}), one where all have the same slope, and two where all have the same start or end point, see \cref{fig:degeneracies}.

We observe that instances with same slope are slightly easier to solve.
On the other hand instances where all trajectories start or end at the same point are significantly easier to solve with all instances being solved within seconds.

\begin{figure}[h]
    \centering
    \includegraphics[]{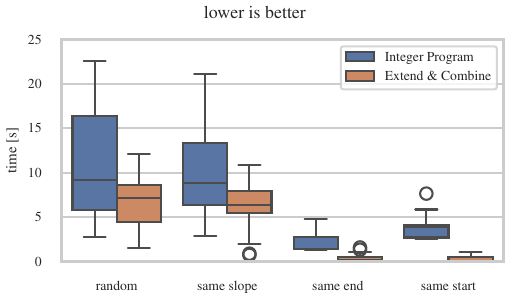}
    \caption{
        Runtime comparison between degenerate cases: uniformly random trajectories, same slope, same start point, and same end point.
        Degenerate instances are consistently easier to solve than random instances, suggesting that real-world instances may be more tractable.
        Note that this plot is clipped; one random and one same-slope instance of the integer program (with runtimes of \SI{33}{\second} and \SI{38}{\second})  exceed the displayed range.
    }\label{fig:degeneracies}
\end{figure}

\subsection{RQ4: Instances from Literature}
\label{sec:evaluation-instances-from-literature}
We further assess the algorithm's performance on the \public{} instance set, as shown in \cref{fig:public} and illustrated by \cref{fig:instance_example}. 
The presence of degeneracies originating from real-world data, led to various numerical challenges in our implementation, see \cref{sec:evaluations-influence-m-n}.
To address these issues, we developed an alternative version of \algIP{} utilizing exact number types for computations, following \cref{lemma:suppnt-changes,lemma:handover-changes}. 
The additional overhead of exact number types results in slightly higher runtimes than in the benchmarks reported in \cref{sec:evaluations-influence-m-n}.
However, our results show that $290$ out of $302$ instances with up to $700$ objects could be solved to provable optimality within \SI{600}{\second} of computation time. 

\begin{figure}
    \vspace{-3mm}
    \centering
    \includegraphics[]{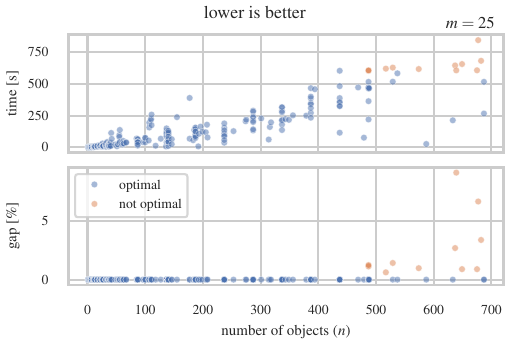}
    \caption{
        Performance of the exact version of \algIP{} on the \public{} dataset. Despite the variety and degeneracies, almost all instances can be solved to provable optimality.
    }\label{fig:public}
    \vspace{-3mm}
\end{figure}

\subsection{Real-World Perspectives}

While our results establish both hardness and optimal solvability for the $\min\max$ variant of the KDC under idealized assumptions, several additional
factors need to be addressed for real-world scenarios.
Overall, our results establish theoretical limits and algorithmic potential under idealized geometric assumptions. Practical deployment requires 
additional considerations for practical issues such as communication delays, uncertainty models, and possibly online adaptation mechanisms.

\paragraph{Modeling Abstraction}
Our formulation is purely geometric and assumes deterministic trajectories known in advance.
We do not model sensor dynamics (e.g., delays when adjusting ranges), communication constraints, or decentralized coordination. 
In practical UAV or air-traffic systems, range updates may incur latency, bandwidth limitations, or partial communication failures.
Such effects may delay handovers or introduce temporary coverage gaps, potentially increasing peak power consumption relative to the idealized optimum.
In principle, our algorithm remains applicable under moderate latency if event times are buffered with safety margins; however, 
practical performance guarantees may degrade accordingly.

\paragraph{Real-Time Adaptivity}
Although polynomial, the event-based enumeration is evaluated only in an offline setting with pre-known trajectories.
In adaptive scenarios with trajectories revealed incrementally or updated online, the recomputation cost may limit scalability.
This is where incremental variants are still necessary for large-scale deployments, at the expense of global optimality guarantees.

\paragraph{Dimensionality and Physical Effects}
Our work focuses on 2D planar coverage and ideal sensing disks to demonstrate basic viability.
Real-world systems operate in 3D, for which additional effects like altitude and occlusion may affect coverage.
Extending the geometric insights to 3D is possible; practical application needs to 
address increased computational complexity and further structural properties of optimal solutions.

\paragraph{Randomness and Uncertainty}
Our experiments demonstrate possibilities for relatively simple trajectories.
They do not explicitly model stochastic motion, sensing noise, or localization
error. In practice, uncertainty in object position or velocity needs to address
robustness margins, effectively enlarging sensing regions and reducing
achievable power savings.

\section{Conclusions}\label{sec:conclusion}
We provide novel insights into tracking large sets of moving objects from a set of fixed observation
stations with optimal energy.
While this problem is hard in theory,
our practical methods can compute provably 
optimal coverage in a matter of seconds, demonstrating perspectives for real-world capability.
There are many promising directions for future work;
deployment requires addressing the described practical issues, such as communication delays, online trajectory updates, and three-dimensional coverage.
If the stations can be repositioned, the problem becomes even more complex, but also opens up new possibilities for optimization that may further reduce energy consumption.
Uncertainty in trajectories and alternative coverage models, such as probabilistic sensing,
also present interesting avenues for further research that could enhance the robustness and applicability of our methods in real-world scenarios.


\begin{figure}[h]
    \includegraphics[]{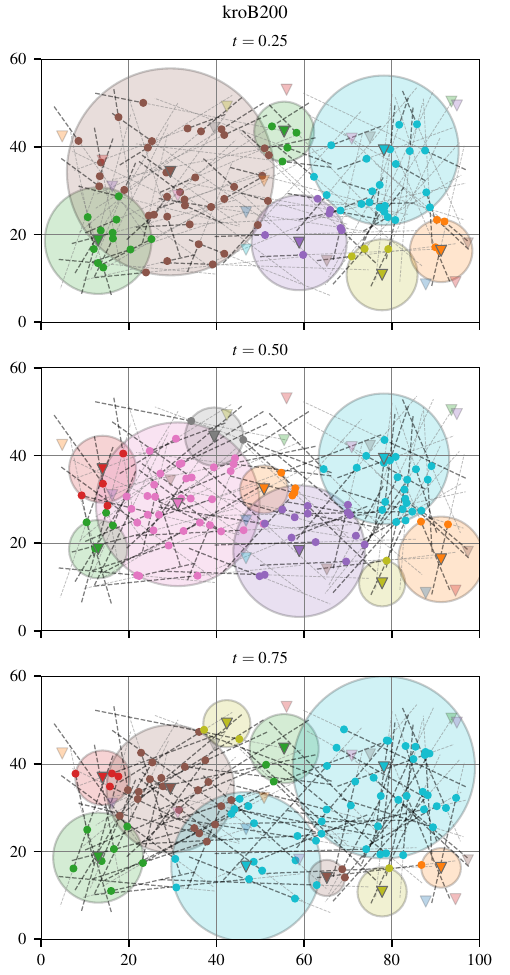}
    \caption{
        \label{fig:instance_example}
        A solution of an instance from \public{} instance set based on the \emph{kroB200} TSPLIB~\cite{reinelt1991} point set at $\tm=0.25,\tm=0.5$ and $\tm=0.75$.
    }
\end{figure}

\ 
\newpage

\bibliographystyle{IEEEtranS}
\bibliography{main.bib}

@article{abu2011multi,
  title     = {Multi cover of a polygon minimizing the sum of areas},
  author    = {Abu-Affash, A Karim and Carmi, Paz and Katz, Matthew J and Morgenstern, Gila},
  journal   = {International Journal of Computational Geometry \& Applications},
  volume    = {21},
  number    = {06},
  pages     = {685--698},
  year      = {2011},
  publisher = {World Scientific}
}

@article{bar2013note,
  title     = {A note on multicovering with disks},
  author    = {Bar-Yehuda, Reuven and Rawitz, Dror},
  journal   = {Computational Geometry},
  volume    = {46},
  number    = {3},
  pages     = {394--399},
  year      = {2013},
  publisher = {Elsevier}
}

@inproceedings{bhowmick2013constant,
  title     = {A constant-factor approximation for multi-covering with disks},
  author    = {Bhowmick, Santanu and Varadarajan, Kasturi and Xue, Shi-Ke},
  booktitle = {Symposium on Computational Geometry (SoCG)},
  pages     = {243--248},
  year      = {2013}
}

@inproceedings{huang2021ptas,
  title     = {{PTAS} for minimum cost multi-covering with disks},
  author    = {Huang, Ziyun and Feng, Qilong and Wang, Jianxin and Xu, Jinhui},
  booktitle = {Symposium on Discrete Algorithms (SODA)},
  pages     = {840--859},
  year      = {2021}
}

@article{huang2024ptas,
  title   = {{PTAS} for Minimum Cost MultiCovering with Disks},
  author  = {Huang, Ziyun and Feng, Qilong and Wang, Jianxin and Xu, Jinhui},
  journal = {SIAM Journal on Computing},
  volume  = {53},
  number  = {4},
  pages   = {1181--1215},
  year    = {2024}
}

@inproceedings{gao2022fast,
  title     = {Fast approximation algorithms for multiple coverage with unit disks},
  author    = {Gao, Xuening and Guo, Longkun and Liao, Kewen},
  booktitle = {Symposium on a World of Wireless, Mobile and Multimedia Networks (WoWMoM)},
  pages     = {185--193},
  year      = {2022}
}

@inproceedings{Georgiev2023,
  author    = {Filipov, Stefan M.
               and Tomova, Fani N.},
  editor    = {Georgiev, Ivan
               and Datcheva, Maria
               and Georgiev, Krassimir
               and Nikolov, Geno},
  title     = {Covering a Set of Points with a Minimum Number of Equal Disks via Simulated Annealing},
  booktitle = {Numerical Methods and Applications},
  year      = {2023},
  pages     = {134--145}
}

@inproceedings{guitouni2025multi,
  author    = {Mariem Guitouni and
               Chek{-}Manh Loi and
               S{\'{a}}ndor P. Fekete and
               Michael Perk and
               Aaron T. Becker},
  title     = {Multi-Covering a Point Set by m Disks with
               Minimum Total Area},
  booktitle = {ICRA},
  pages     = {3000--3006},
  publisher = {{IEEE}},
  year      = {2025},
  doi       = {10.1109/ICRA55743.2025.11127835},
  bibsource = {dblp computer science bibliography, https://dblp.org}
}

@inproceedings{alt2006minimum,
  author    = {Helmut Alt and
               Esther M. Arkin and
               Herv{\'{e}} Br{\"{o}}nnimann and
               Jeff Erickson and
               S{\'{a}}ndor P. Fekete and
               Christian Knauer and
               Jonathan Lenchner and
               Joseph S. B. Mitchell and
               Kim Whittlesey},
  title     = {Minimum-cost coverage of point sets by disks},
  booktitle = {Symposium on Computational Geometry (SoCG)},
  pages     = {449--458},
  year      = {2006},
  _url      = {https://doi.org/10.1145/1137856.1137922},
  doi       = {10.1145/1137856.1137922}
}

@misc{gurobi,
  author = {{Gurobi Optimization, LLC}},
  title  = {{Gurobi Optimizer Reference Manual}},
  year   = 2025,
  url    = {https://www.gurobi.com}
}

@misc{VLSI,
  title = {{VLSI} Data Set},
  author = {{Rohe, Andre}},
  url   = {https://www.math.uwaterloo.ca/tsp/vlsi/index.html}
}

@article{mazzenga20245g,
  author   = {Mazzenga, Franco and Giuliano, Romeo and Vizzarri, Alessandro},
  journal  = {IEEE Access},
  title    = {5G-Based Synchronous Network for Air Traffic Monitoring in Urban Air Mobility},
  year     = {2024},
  volume   = {12},
  number   = {},
  pages    = {188542-188559},
  doi      = {10.1109/ACCESS.2024.3513212},
  issn     = {2169-3536},
  month    = {}
}

@article{cohen2021urban,
  author   = {Cohen, Adam P. and Shaheen, Susan A. and Farrar, Emily M.},
  journal  = {IEEE Transactions on Intelligent Transportation Systems},
  title    = {Urban Air Mobility: History, Ecosystem, Market Potential, and Challenges},
  year     = {2021},
  volume   = {22},
  number   = {9},
  pages    = {6074-6087},
  doi      = {10.1109/TITS.2021.3082767}
}

@article{crevel2012covering,
  author    = {Crevel Bautista{-}Santiago and
               Jos{\'{e}} Miguel D{\'{\i}}az{-}B{\'{a}}{\~{n}}ez and
               Ruy Fabila Monroy and
               David Flores{-}Pe{\~{n}}aloza and
               Dolores Lara and
               Jorge Urrutia},
  title     = {Covering moving points with anchored disks},
  journal   = {Eur. J. Oper. Res.},
  volume    = {216},
  number    = {2},
  pages     = {278--285},
  year      = {2012},
  _url      = {https://doi.org/10.1016/j.ejor.2011.07.048},
  doi       = {10.1016/J.EJOR.2011.07.048},
  timestamp = {Mon, 03 Mar 2025 21:38:02 +0100},
  biburl    = {https://dblp.org/rec/journals/eor/Bautista-SantiagoDMFLU12.bib},
  bibsource = {dblp computer science bibliography, https://dblp.org}
}

@article{basch1999data,
  author    = {Julien Basch and
               Leonidas J. Guibas and
               John Hershberger},
  title     = {Data Structures for Mobile Data},
  journal   = {J. Algorithms},
  volume    = {31},
  number    = {1},
  pages     = {1--28},
  year      = {1999},
  _url      = {https://doi.org/10.1006/jagm.1998.0988},
  doi       = {10.1006/JAGM.1998.0988},
  timestamp = {Sat, 08 Jun 2024 13:16:29 +0200},
  biburl    = {https://dblp.org/rec/journals/jal/BaschGH99.bib},
  bibsource = {dblp computer science bibliography, https://dblp.org}
}

@inproceedings{bespamyatnikh2000mobile,
  author    = {Sergei Bespamyatnikh and
               Binay K. Bhattacharya and
               David G. Kirkpatrick and
               Michael Segal},
  title     = {Mobile facility location},
  booktitle = {International Workshop on Discrete Algorithms
               and Methods for Mobile Computing and Communications {(DIAL-M)}},
  pages     = {46--53},
  year      = {2000},
  _url      = {https://doi.org/10.1145/345848.345858},
  doi       = {10.1145/345848.345858},
  timestamp = {Tue, 06 Nov 2018 16:58:28 +0100},
  biburl    = {https://dblp.org/rec/conf/dialm/BespamyatnikhBKS00.bib},
  bibsource = {dblp computer science bibliography, https://dblp.org}
}

@article{demaine2020computing,
  title   = {Computing convex partitions for point sets in the plane: The {CG:SHOP Challenge} 2020},
  author  = {Demaine, Erik D and Fekete, S{\'a}ndor P and Keldenich, Phillip and Krupke, Dominik and Mitchell, Joseph S. B.},
  journal = {arXiv preprint arXiv:2004.04207},
  year    = {2020}
}

@article{demaine2022area,
  title     = {Area-optimal simple polygonalizations: The {CG Challenge} 2019},
  author    = {Demaine, Erik D and Fekete, S{\'a}ndor P. and Keldenich, Phillip and Krupke, Dominik and Mitchell, Joseph S. B.},
  journal   = {Journal of Experimental Algorithmics (JEA)},
  volume    = {27},
  number    = {2},
  pages     = {1--12},
  year      = {2022},
  publisher = {ACM New York, NY}
}

@article{reinelt1991,
  author    = {Reinelt, G.},
  journal   = {ORSA Journal of Computing},
  number    = {4},
  pages     = {376--384},
  title     = {{{TSPLIB}--A Traveling Salesman Problem Library}},
  volume    = {3},
  year      = {1991}
}

@article{EDER2020105984,
  author   = {G{\"u}nther Eder and Martin Held and Stein{\th}{\'{o}}r Jasonarson and Philipp Mayer and Peter Palfrader},
  doi      = {10.1016/j.dib.2020.105984},
  issn     = {2352-3409},
  journal  = {Data in Brief},
  pages    = {105984},
  title    = {Salzburg Database of Polygonal Data: Polygons and Their Generators},
  volume   = {31},
  year     = {2020}
}

@inproceedings{fekete_et_al:LIPIcs.SoCG.2025.48,
  author    = {Fekete, S\'{a}ndor P. and Keldenich, Phillip and Perk, Michael},
  title     = {{Exact Algorithms for Minimum Dilation Triangulation}},
  booktitle = {Symposium on Computational Geometry (SoCG)},
  pages     = {48:1--48:18},
  year      = {2025},
  doi       = {10.4230/LIPIcs.SoCG.2025.48}
}

@misc{ouranonymousgit,
  author    = {Loi, Chek-Manh and Perk, Michael},
  title     = {Drone Air Traffic Control: Tracking a Set of
               Moving Objects with Minimal Power
               },
  year      = 2025,
  publisher = {Zenodo},
  doi       = {10.5281/zenodo.17119781},
  url       = {https://doi.org/10.5281/zenodo.17119781}
}

\end{document}